\def\ca#1{{\mathcal#1}}
\def\mx#1{{\vec#1}}
\def\cO{{\ca O}}
\def\pbl#1{{\sc#1}}
\newcommand{\crg}{\mathrm{cr}}
\newcommand{\clg}{\mathrm{cl}}
\newcommand{\crgnc}{\mathrm{cr^n}}
\begin{document}
	
\title{Exact Crossing Number Parameterized \mbox{by Vertex Cover}}

\author{Petr Hlin\v en\'y\,\inst{1}\orcidID{0000-0003-2125-1514}%
\thanks{Supported by the {Czech Science Foundation}, project no.~{17-00837S}.}
\thanks{Corresponding author: Petr Hlin\v en\'y} \and
\mbox{Abhisekh Sankaran}\,\inst{2}
\thanks{Supported by the Leverhulme Trust through a Research Project Grant
	on `Logical Fractals'.}
}

\authorrunning{P.~Hlin\v{e}n\'y and A.~Sankaran}

\institute{Faculty of Informatics of Masaryk University, Brno, Czech Republic
\\\email{hlineny@fi.muni.cz} 
\smallskip \and
Department of Computer Science and Technology, University of Cambridge, UK
\\\email{abhisekh.sankaran@cl.cam.ac.uk}}

	\maketitle

\begin{abstract}
We prove that the exact crossing number of a graph can be efficiently
computed for simple graphs having bounded vertex cover.
In more precise words, \pbl{Crossing Number} is in FPT when parameterized by the
vertex cover size.
This is a notable advance since we know only {\em very few} nontrivial 
examples of graph classes with unbounded and yet efficiently computable
crossing number.
Our result can be viewed as a
strengthening of a previous result of Lokshtanov [arXiv, 2015] that
\pbl{Optimal Linear Arrangement} is in FPT when parameterized by the
vertex cover size, and we use a similar approach of reducing the problem to
a tractable instance of \pbl{Integer Quadratic Programming} as
in Lokshtanov's paper.

\keywords{Graph drawing; crossing number; parameterized complexity; vertex
cover}
\end{abstract}

\section{Introduction}

The crossing number $\crg(G)$ of a graph $G$ is the minimum number of pairwise
edge crossings in a drawing of $G$ in the plane. 
We refer to Section~\ref{sec:prelim} for the definitions of a drawing and
edge crossings.
Finding the crossing number of a graph
is one of the most prominent combinatorial optimization problems in 
graph theory and is NP-hard already in very special cases, e.g.,
even when considering a planar graph with one added
edge~\cite{DBLP:journals/siamcomp/CabelloM13}.
Moreover, we know that computing the crossing number is
APX-hard~\cite{DBLP:journals/dcg/Cabello13}, i.e., there
does not exist a PTAS (unless P=NP).

On the other hand, there is an algorithm~\cite{DBLP:journals/jcss/BhattL84} that
approximates not directly the crossing number, but the quantity $n+\crg(G)$
where $n=|V(G)|$; 
its currently best incarnation does so within a factor of $\cO(\log^2 n)$~\cite{egs}. 
The first sublinear approximation factor of $\tilde{\cO}(n^{0.9})$ 
has been achieved by~\cite{stoccrossing}.
Concerning rectilinear drawings of dense graphs there is another recent
approximation result~\cite{DBLP:journals/comgeo/FoxPS19}.
Much better crossing number approximation results are known for some restricted graph classes,
such as for graphs embeddable in a fixed surface~\cite{ghls,surfaceApprox}
and for graphs from which few edges or vertices can be removed to make them
planar \cite{HS06,cm08,apex,DBLP:journals/jco/ChimaniH17}.

Despite this recent progress in crossing number approximations in special
cases, there are nearly no nontrivial formulas or efficient algorithms 
for computing the exact crossing number in sufficiently ``rich'' graph classes.
Even for very nicely structured classes such as the complete
graphs, the complete bipartite graphs and the toroidal grids (Cartesian
products of cycles), their exact crossing numbers are only conjectured,
but not proved (e.g., we do not know $\crg(K_{13})\,$).

One notable exception (to near-impossibility of computing efficiently 
the exact crossing number) is that the exact crossing number can be
efficiently computed (even in linear time) when it is bounded
\cite{DBLP:journals/jcss/Grohe04,DBLP:conf/stoc/KawarabayashiR07};
more precisely, that \pbl{Crossing Number} is in linear-time 
FPT when parameterized by the solution value.
However, considering {\em nontrivially rich} graph classes with 
unbounded crossing number, there currently seems to be only one such further
efficient result by Biedl et al.~\cite{DBLP:conf/isaac/BiedlCDM17};
computing the exact crossing number for maximal graphs of pathwidth exactly~$3$.

Our paper brings one more small piece to this crossing-number puzzle.
A {\em vertex cover} in a graph $G$ is a set $X$ of vertices of $G$
such that every edge of $G$ has at least one end in~$X$.
We prove that it is possible to compute in FPT the exact crossing number of
a simple graph $G$ when the minimum vertex cover size of $G$ is bounded as a parameter.
\begin{theorem}\label{thm:main}
Given a simple graph $G$,
the problem to compute the crossing number of $G$ and the corresponding
optimal drawing of~$G$ is fixed-parameter tractable with respect to the
parameter~$k=|X|$ where $X\subseteq V(G)$ is a vertex cover of~$G$.
\end{theorem}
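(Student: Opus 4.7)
The plan is to follow Lokshtanov's strategy for \pbl{Optimal Linear Arrangement}. Write $I := V(G)\setminus X$; since $X$ is a vertex cover, $I$ is an independent set, and each vertex $v\in I$ is determined (ignoring multiplicity) by its neighbourhood $N(v)\subseteq X$, so there are at most $2^k$ possible neighbourhood \emph{types}. I would proceed in four stages: (a)~enumerate a collection, of size bounded by a function of $k$, of topological ``skeletons'' describing the drawing restricted to the cover together with the combinatorial data of how an $I$-vertex of each type could be embedded; (b)~for each skeleton, introduce integer variables counting how many $I$-vertices of each type are drawn in each admissible ``mode''; (c)~express the total crossing number as an integer quadratic function of these variables, with coefficients bounded by a function of $k$; (d)~solve the resulting \pbl{Integer Quadratic Program} by the FPT algorithm used in Lokshtanov's paper.

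In stage~(a) I would first enumerate all topological drawings of $G[X]$ up to combinatorial equivalence: this subgraph has at most $k$ vertices and $\binom{k}{2}$ edges, and an optimum drawing may be assumed (by a routine argument) to have at most $\crg(K_k)$ crossings among its own edges, so only $f_1(k)$ possibilities arise. Each such drawing yields a planarization with at most $f_2(k)$ faces. A \emph{schema} for a type $t\subseteq X$ would then specify: a face containing the $I$-vertex, a cyclic rotation of its $|t|$ incident edges, and, for each such edge, a combinatorial route across the planarization. A standard normal-form argument (no two edges cross more than once, no edge self-crosses, adjacent edges do not cross) bounds the number of reasonable routes per edge by a function of $k$, hence the number of schemas per type by some $f_3(k)$. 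In stage~(b) I would let $x_{t,s}$ denote the number of $I$-vertices of type $t$ drawn according to schema $s$, subject to the linear constraints $\sum_s x_{t,s}=n_t$, where $n_t$ is the known number of $I$-vertices of type $t$.

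In stage~(c) the crossing number decomposes as a constant (crossings inside $G[X]$), plus linear terms (crossings between each $I$-vertex and the fixed cover edges, depending only on $s$), plus quadratic terms: the number of crossings between two $I$-vertices of types $t_1,t_2$ with schemas $s_1,s_2$ is a constant $c(t_1,s_1,t_2,s_2)$ determined by the schemas, contributing $c(t_1,s_1,t_2,s_2)\cdot x_{t_1,s_1}x_{t_2,s_2}$ when $(t_1,s_1)\neq(t_2,s_2)$ and $c(t,s,t,s)\binom{x_{t,s}}{2}$ on the diagonal. The result is an IQP in a $f(k)$-bounded number of variables with integer coefficients bounded by $h(k)$, which is solvable in FPT time by Lokshtanov's IQP algorithm.

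The main obstacle is the normalization argument supporting stage~(a): one must show both that restricting to drawings in which every $I$-vertex conforms to one of the boundedly many schemas does not increase the crossing number, and, conversely, that every assignment $\{x_{t,s}\}$ satisfying the linear constraints is realizable by an actual drawing whose crossing number equals the quadratic objective. The delicate point is that $I$-vertices sharing a type and schema must be drawable ``in parallel'' so that their mutual crossings equal the predicted diagonal contribution; this typically requires placing them in a thin tubular neighbourhood and routing their edges alongside a fixed reference routing of the schema. This combinatorial bookkeeping---not the IQP step itself---is where the real work lies.
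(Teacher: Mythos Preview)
Your high-level strategy matches the paper's: enumerate a bounded family of skeletons, introduce one IQP variable per embedding mode, and invoke Lokshtanov's algorithm. The gap you yourself flag---the diagonal coefficient and the realizability of parallel copies---is exactly where the paper's technical contribution lies, and it is resolved by two specific ingredients that your proposal does not supply.

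First, the right granularity for a ``mode'' is the neighbourhood $Y\subseteq X$ together with the clockwise cyclic order of $Y$ around the $I$-vertex, and \emph{nothing about the routing}. The paper calls this a \emph{topological cluster}. Two $I$-vertices in the same cluster, in any good drawing, induce a copy of $K_{2,m}$ (with $m=|Y|$) in which both degree-$m$ vertices see the same cyclic order of neighbours; a lemma of Christian, Richter and Salazar states that every such good drawing has at least $Z(m)=\lfloor m/2\rfloor\cdot\lfloor(m-1)/2\rfloor$ crossings, and a simple ``stacking'' construction attains this with equality. Hence the diagonal coefficient is exactly $Z(m)$, independent of routing---this is the number you left unspecified as $c(t,s,t,s)$. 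Second, the off-diagonal normalization is an averaging argument: within each cluster pick the vertex whose incident edges carry the fewest non-cluster crossings and reroute every other member of that cluster in a thin strip alongside it; this never increases the number of non-cluster crossings. After this step the skeleton is an actual good drawing $C_X$ with one representative per cluster, and all off-diagonal coefficients are simply read off from the crossings present in $C_X$. Your finer schemas (including explicit routes through the planarization) are not wrong in spirit, but they obscure these two facts, leave the diagonal undefined, and make the pairwise coefficients $c(t_1,s_1,t_2,s_2)$ awkward to pin down and simultaneously realize.

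One small correction: you cannot assume that the subdrawing on $G[X]$ has at most $\crg(K_k)$ crossings in an optimal drawing of $G$ (drawing $G[X]$ suboptimally may well reduce crossings with the $I$-vertices). What you actually need---and what the paper uses---is that in a good drawing each pair of edges crosses at most once, so the whole abstract clustering $C_X$ has at most $k+2^k(k-1)!$ vertices and hence at most $k^{\cO(k)}$ crossings; this is what bounds both the enumeration and the IQP coefficients.
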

\noindent
We remark that computing the minimum vertex cover $X$ is itself in FPT when
parameterized by~$|X|$ \cite{MR1167659}, and so we do not need $X$ on the input.

Although bounding the vertex cover also bounds the pathwidth of a graph,
our Theorem~\ref{thm:main} is incomparable with \cite{DBLP:conf/isaac/BiedlCDM17} since
their result gives exact values only for pathwidth~$3$ (for higher values of
pathwidth \cite{DBLP:conf/isaac/BiedlCDM17} gives an approximation).
Another notable point is that the classes of graphs of bounded vertex cover
are {monotone} (closed under taking subgraphs), while the exact algorithm
in \cite{DBLP:conf/isaac/BiedlCDM17} requires maximal graphs of
pathwidth~$3$ (again, for non-maximal such graphs there is only a
$2$-approximation).

\medskip
In the algorithm of Theorem~\ref{thm:main} we follow the approach of 
Lokshtanov~\cite{DBLP:journals/corr/Lokshtanov15},
who showed that \pbl{Integer Quadratic Programming} is in FPT when
parameterized by the number of variables and the maximum of the 
absolute values of the coefficients.
Lokshtanov then used his IQP algorithm to show that the problem
\pbl{Optimal Linear Arrangement} of a graph $G$ is in FPT when
parameterized by the minimum vertex cover size of~$G$.
In the \pbl{Optimal Linear Arrangement} problem of a graph $G$, the task
is to find a linear ordering of the vertex set of $G$ which minimizes the
sum of ``lengths'' of the edges of $G$.
With respect to this, it is worth to note that the first NP-hardness proof
for \pbl{Crossing Number} \cite{GJ} used a simple reduction from
\pbl{Optimal Linear Arrangement}, one which asymptotically preserves 
almost any reasonable graph parameter including vertex cover.
Although we cannot directly apply our algorithm to the result of that
reduction (due to a presence of parallel edges, as explained below),
a simple modification along the lines of that reduction allows to
deduce Lokshtanov's result for \pbl{Optimal Linear Arrangement}
from our algorithm.

\section{Basic Definitions}\label{sec:prelim}

We use the standard terminology of graph theory.
A special attention has to be paid to simplicity of graphs -- while
(non-)simplicity is usually not an issue for the crossing number
(just subdivide parallel edges), it becomes important with respect to the
minimum vertex cover.
Therefore, we will consider {\em simple graphs} throughout the paper by
default, and we will use the term {\em multigraph} otherwise.

A \emph{drawing} of a graph $G=(V,E)$ is a mapping of the vertices $V$ 
to distinct points in the plane, and of the edges $E$ to simple curves
connecting their respective end points but not containing any other vertex point.
When convenient, we will refer to the elements (vertices and edges) of the
drawing as to the corresponding elements of~$G$.
A \emph{crossing} is a common point of two distinct edge curves, other than their common end point.
It is well established that the search for an optimal solution to the crossing
number problem can be restricted to so called \emph{good drawings}:
any pair of edges crosses at most once, adjacent edges do not cross, and there 
is no crossing point in common of three or more edges.

\begin{definition}\label{def:crnumber}
The problem \pbl{Crossing Number} asks for a {\em good drawing} $D$
of a given graph $G$ with the least possible number of crossings.

The number of crossings in a particular drawing $D$ is denoted by $\crg(D)$ 
and the minimum over all good drawings $D$ of a graph $G$ by~$\crg(G)$.
We call $\crg(D)$ and $\crg(G)$ the {\em crossing number}
of the drawing $D$ and the graph $G$, respectively.
\end{definition}

We will also need to deal with {\em weighted crossing number}.
Consider a graph $H$ with a weight assignment $w:E(H)\to\mathbb N$.
Then a crossing between edges $e,f\in E(H)$ naturally counts as 
$w(e)\cdot w(f)$ crossings (as if they were bunches of $w(e)$ and $w(f)$
parallel edges).
The weighted crossing number $\crg(H)$ of weighted $H$
is defined as in Definition~\ref{def:crnumber} while counting crossings 
this weighted way.

\medskip
Following \cite{DBLP:journals/corr/Lokshtanov15},
we introduce the problem \pbl{Integer Quadratic Programming} (IQP) in a
generalized form.%
\footnote{The stated generalized form comes from page 4 of
\cite{DBLP:journals/corr/Lokshtanov15}, formula (2) and below.}
Its input consists of a $k\times k$ integer matrix~$\mx Q$, an $m\times k$
and $m'\times k$ integer matrices $\mx A$ and $\mx C$,
a $k$-dimensional integer vector $\vec p$,
and $m$- and $m'$-dimensional integer vectors $\vec b$ and $\vec d$. 
The task is to find an optimal solution $\vec z^\circ$ to the following
optimization problem:
\begin{eqnarray}
	\mbox{Minimize\qquad} & \vec z^T\mx Q\vec z &+\> \vec p^T\vec z 
\nonumber\\\label{eq:IQP}
	\mbox{subject to\qquad} & \mx A\vec z &\leq~ \vec b \\
		 & \mx C\vec z &=~ \vec d 
\nonumber\\\nonumber
		 & \vec z &\in~ \mathbb Z^k
\end{eqnarray}
Note that ``finding a solution'' of an IQP instance means exactly one of the
following three outcomes:
the instance is infeasible and we correctly detect that, 
or the instance is feasible and unbounded and we again detect that,
or the instance is feasible and bounded and we output an optimal solution
$\vec z^\circ$.

\begin{theorem}[Lokshtanov \cite{DBLP:journals/corr/Lokshtanov15}]
\label{thm:IQP}
Consider the \pbl{Integer Quadratic Programming} problem as above
\eqref{eq:IQP},
where the input consists of the integer matrices $\mx A$, $\mx C$, $\mx Q$ and
the integer vectors $\vec b$, $\vec d$, $\vec p$.  Let $L$ denote the
length of the combined bit-representation of this input, and let
$\lambda$ be the largest absolute value of the entries in the matrices
$\mx A$, $\mx C$ and $\mx Q$, and the entries in the vector $\vec p$.
There exists an algorithm which finds a solution of this instance of
IQP in time $f(k,\lambda)\cdot L^{\cO(1)}$ for some computable
function~$f$ (that is, fixed-parameter tractable with input size $L$
and parameters $k$ and~$\lambda$).
\end{theorem}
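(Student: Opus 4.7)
The plan is to extend Lenstra's classical algorithm for integer linear programming in fixed dimension to the quadratic-objective setting, treating the quadratic form as the genuinely new obstacle and reusing ILP as a subroutine. The input defines an integer polyhedron $P\subseteq\mathbb Z^k$ via $\mx A\vec z\le\vec b$, $\mx C\vec z=\vec d$ with coefficient magnitudes at most $\lambda$, and we must minimise $\vec z^T\mx Q\vec z+\vec p^T\vec z$ with the entries of $\mx Q$ and $\vec p$ also bounded by $\lambda$ in absolute value.

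First I would run Lenstra's ILP algorithm on the linear system alone, deciding feasibility in time $g(k)\cdot L^{O(1)}$ and producing a seed integer point $\vec z_0\in P$ when one exists. Next I would compute a finite integer Hilbert basis $\{\vec v_1,\dots,\vec v_r\}$ of the recession cone of $P$; standard polyhedral estimates bound both $r$ and the coordinate magnitudes of the $\vec v_i$ by a function of $k$ and $\lambda$, and the basis is constructible in FPT time. Along each recession generator the objective restricts to the univariate quadratic
\begin{equation*}
(\vec v_i^T\mx Q\vec v_i)\,t^2+(2\vec z^T\mx Q\vec v_i+\vec p^T\vec v_i)\,t+\text{const},
\end{equation*}
so the objective is bounded from below on $P$ iff $\vec v_i^T\mx Q\vec v_i\ge 0$ for every $i$ and the linear coefficient is nonnegative whenever the quadratic one vanishes; these checks certify an ``unbounded'' verdict in FPT time.

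The core of the argument is a proximity lemma asserting that, when the problem is feasible and the objective is bounded, some optimum admits a decomposition $\vec z^\circ=\vec z_0+\vec w+\vec u$ in which $\vec w$ is an integer combination of the $\vec v_i$ with coefficients bounded by some $h(k,\lambda)$, while $\vec u$ lies in a ``flat'' sublattice on which $\mx Q$ is identically zero and the residual problem becomes linear. The heuristic is that in each curved recession direction the continuous minimiser $t^\ast=-(2\vec z^T\mx Q\vec v_i+\vec p^T\vec v_i)/(2\vec v_i^T\mx Q\vec v_i)$ has magnitude bounded by a function of $k$ and $\lambda$, since numerator and denominator are built from bounded entries, and integer rounding perturbs this only by $O(1)$. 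Enumerating the $f(k,\lambda)$-many admissible $\vec w$ and solving, for each, one Lenstra-style ILP that minimises the now-linear residual objective on the flat sublattice would yield the claimed running time of $f(k,\lambda)\cdot L^{O(1)}$.

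The hard part is proving the proximity lemma in full generality: the direction-by-direction heuristic has to be upgraded to a simultaneous, multidimensional statement that controls cross-terms between curved and flat recession directions and rules out the optimum drifting arbitrarily far from $\vec z_0+\vec w$ after integer rounding. Carrying this out rigorously calls for decomposing the lattice according to the signature of $\mx Q$ restricted to the recession cone and deploying a Kannan-style integer proximity bound calibrated to bounded coefficients; this combinatorial-geometric step, rather than the ILP oracle calls, is the true engine of Theorem~\ref{thm:IQP} and the place where the parameter $\lambda$ is genuinely used.
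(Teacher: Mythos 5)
This statement is not proved in the paper at all: it is Lokshtanov's theorem, imported with a citation and used purely as a black box (only its statement feeds into the proof of Theorem~\ref{thm:mainbits}). So there is no in-paper argument to compare yours against; what can be assessed is whether your sketch would constitute a proof of Lokshtanov's result, and it would not, for three concrete reasons.

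First, your unboundedness test is wrong. Nonnegativity of $\vec v_i^T\mx Q\vec v_i$ for every Hilbert-basis generator $\vec v_i$ of the recession cone does not certify that the objective is bounded below: the form can vanish on each generator and still tend to $-\infty$ along a diagonal ray, because of cross terms. For instance, with $k=2$, $\mx Q=\left(\begin{smallmatrix}0&-1\\-1&0\end{smallmatrix}\right)$ and recession cone the nonnegative orthant (generators $\vec e_1,\vec e_2$), one has $\vec e_1^T\mx Q\vec e_1=\vec e_2^T\mx Q\vec e_2=0$, yet $(t\vec e_1+t\vec e_2)^T\mx Q(t\vec e_1+t\vec e_2)=-2t^2\to-\infty$. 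Any correct boundedness test must handle the quadratic form on the whole cone, not generator by generator. Second, the heuristic behind your proximity lemma is unsound: in the continuous minimiser $t^\ast=-(2\vec z^T\mx Q\vec v_i+\vec p^T\vec v_i)/(2\vec v_i^T\mx Q\vec v_i)$, the numerator contains $\vec z$ itself, and the coordinates of feasible or optimal points are bounded only in terms of the input length $L$, not in terms of $k$ and $\lambda$. The claim that ``numerator and denominator are built from bounded entries'' is therefore false, and with it the claim that the optimum sits within $h(k,\lambda)$ of the seed point $\vec z_0$ along curved directions; the optimum can genuinely lie at distance exponential in $L$ from any fixed feasible point, which is exactly why the theorem is nontrivial. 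Third, and decisively, you acknowledge that the multidimensional proximity lemma --- which you correctly identify as the true engine of the theorem, controlling cross-terms between curved and flat directions --- is left unproven. A reduction of the theorem to an unproven lemma that carries all of the difficulty is a research plan, not a proof; as it stands, the proposal has a genuine gap at its core, in addition to the two errors above.
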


\section{Clustered Optimal Drawings}\label{sec:clustered}

We start with a high-level idea of our solution.
Consider a simple graph $G$ and a vertex cover $X\subseteq V(G)$ of fixed
size~$k=|X|$.
Then $V(G)\setminus X$ is an independent set
and every vertex of $V(G)\setminus X$ can be classified by its neighbourhood
in~$X$ (and this classification is unique up to automorphisms).
At the first sight it thus appears natural to form ``uniform'' clusters of
the vertices with the same neighbourhood (to be treated the same way,
whatever this means), and so seemingly ``reduce'' the input
size to $\cO(2^k)$ and then solve it in FPT by brute force.
This is, unfortunately, not at all sufficient.%
\footnote{%
In the exemplary case of \pbl{Optimal Linear Arrangement}
\cite{DBLP:journals/corr/Lokshtanov15}, one may easily see the problem on the
graph $K_{k,n}$, whose smaller part $X$ of size $k$ is the minimum vertex
cover and all vertices of the larger part form one cluster with the same
neighbourhood~$X$.
Yet, an optimal linear arrangement solution for $K_{k,n}$ has to alternate the vertices
of $X$ and those of the large part in the middle of the arrangement.
Hence in this particular case of OLA, one has to consider at least the relative
position of vertices with respect to $X$ in addition to their neighbourhoods.
}

\medskip
As we will see,
while solving the crossing number problem, it would be enough to additionally 
classify the vertices of $V(G)\setminus X$ by the cyclic ordering of their 
edges in the (yet to be found) optimal drawing of~$G$.
Furthermore, it will also be useful to restrict the arguments
to the aforementioned good drawings
(pairs of edges crosses at most once and adjacent pairs do not cross).
In particular, in a good drawing the edges incident to one common vertex
always form an uncrossed star.
We give the following core definition (see also Figure~\ref{fig:clusters}):
\begin{definition}\label{def:clustering}
Let $G$ be a graph with a vertex cover $X$, and $D$ be a good drawing of $G$.
Then two vertices $x,y\in V(G)\setminus X$ belong to the same {\em
topological cluster} in $D$ (with implicit respect to~$X$)
if their neighbourhood in $X$ is the same, and
the clockwise cyclic order of the neighbours of $x$ within $D$ is the same
as the clockwise cyclic order of the neighbours of $y$.
\\(Note that a vertex of $X$ does {\em not} belong to any topological cluster in~$D$.)
\end{definition}

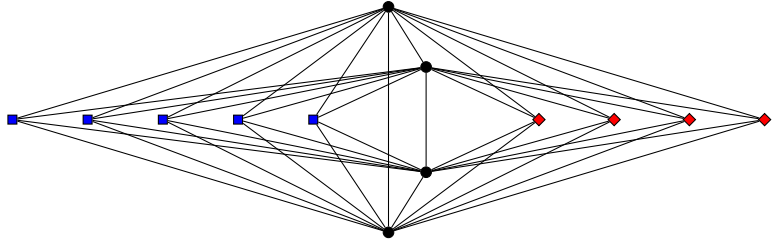
\begin{figure}[t]
\centering
\begin{tikzpicture}[
        triangle/.style = {regular polygon, regular polygon sides=3},
        square/.style = {regular polygon, regular polygon sides=4},
]
\tikzstyle{every node}=[draw, fill=black, shape=circle, minimum size=4pt, inner sep=1.2pt];
\node (m1) at (0,0) {};
\node (m2) at (0,3) {};
\node (m3) at (0.5,0.8) {};
\node (m4) at (0.5,2.2) {};
\node[fill=blue,square] (l1) at (-5,1.5) {};
\node[fill=blue,square] (l2) at (-4,1.5) {};
\node[fill=blue,square] (l3) at (-3,1.5) {};
\node[fill=blue,square] (l4) at (-2,1.5) {};
\node[fill=blue,square] (l5) at (-1,1.5) {};
\node[fill=red,diamond] (r1) at (5,1.5) {};
\node[fill=red,diamond] (r2) at (4,1.5) {};
\node[fill=red,diamond] (r3) at (3,1.5) {};
\node[fill=red,diamond] (r4) at (2,1.5) {};
\draw (m1)--(m3)--(m4)--(m2)--(m1);
\draw (m1)--(l1)--(m2)--(r1)--(m1);
\draw (m1)--(l2)--(m2)--(r2)--(m1);
\draw (m1)--(l3)--(m2)--(r3)--(m1);
\draw (m1)--(l4)--(m2)--(r4)--(m1);
\draw (m1)--(l5)--(m2);
\draw (m4)--(l1)--(m3)--(r1)--(m4);
\draw (m4)--(l2)--(m3)--(r2)--(m4);
\draw (m4)--(l3)--(m3)--(r3)--(m4);
\draw (m4)--(l4)--(m3)--(r4)--(m4);
\draw (m4)--(l5)--(m3);
\end{tikzpicture}
\caption{An illustration of topological clusters with respect to the vertex
cover formed by the middle four black vertices:
any two blue (square) vertices on the left belong to the same topological cluster,
and likewise for any two red (diamond) vertices.
Any blue and a red vertex belong to different topological clusters.
}
\label{fig:clusters}
\end{figure}

We aim to show that an optimal drawing of our graph $G$ can be obtained in
such a form that the topological clusters of vertices and the incident edges
are ``drawn closely together''.
The same idea, only for the special case of the complete bipartite graph
$G=K_{k,n}$, has already been used by Christian, Richter and
Salazar~\cite{DBLP:journals/jct/ChristianRS13} (their research goal, though,
was different).
Our paper can be considered a generalization of (a part 
of)~\cite{DBLP:journals/jct/ChristianRS13}.
To achieve our goal, we separate two kinds of crossings and rigorously describe
a topological clustering of a drawing~of~$G$.

Assume a good drawing $D$ of~$G$.
Having two edges $e,f\in E(G)$ such that one end of $e$ belongs to the same
topological cluster in $D$ as one end of~$f$,
we say that a (possible) crossing of $e$ and $f$ is a {\em cluster crossing}.
All other edge crossings occurring in $D$ are called {\em non-cluster crossings}
(and they include all crossings on edges with both ends in~$X$).
Here we denote by $\crgnc(D)$ the number of non-cluster crossings
(possibly weighted) in the drawing~$D$.

In the following definition we, informally, select one weighted
``representative'' of each topological cluster of a drawing~$D$.
\begin{definition}\label{def:topocluster}
A drawing $D_X$ is called a {\em topological clustering} of the drawing $D$
(of a graph $G$) with respect to its vertex cover $X$ if the following hold:
\begin{itemize}\vspace*{-1ex}
\item $D_X$ is an induced subdrawing of $D$ and $V(D_X)\supseteq X$,
\item every vertex of $V(D)\setminus X$ belongs to the same topological cluster
in $D$ as some vertex of $V(D_X)\setminus X$,
\item no two vertices of $V(D_X)\setminus X$ are in the same topological
cluster in $D_X$,
and \item $D_X$ is equipped with a weight function 
$c:V(D_X)\setminus X\to\mathbb N$ such that, for every $t\in V(D_X)\setminus X$,
the size of the topological cluster in $D$
containing $t$ equals~$c(t)$.
\end{itemize}
\end{definition}
\noindent
Note that there can be many (topologically) different
topological clusterings $D_X$ of the same drawing $D$, 
depending on how the ``representative'' vertices from $V(D_X)\setminus X$ 
are chosen.
See also Figure~\ref{fig:clusterings}

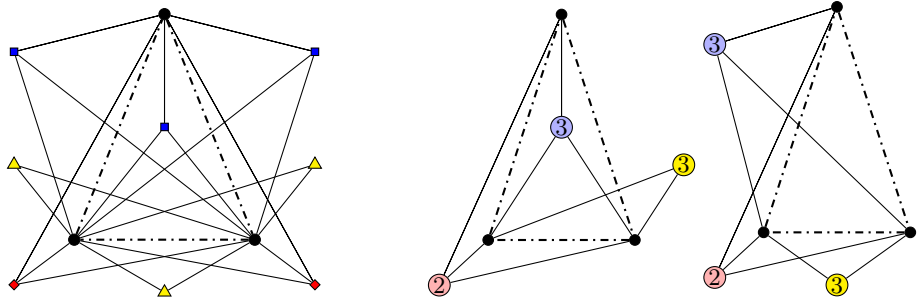
\begin{figure}[t]
\centering
\begin{tikzpicture}[xscale=0.8,
        triangle/.style = {regular polygon, regular polygon sides=3},
        square/.style = {regular polygon, regular polygon sides=4},
]
\tikzstyle{every node}=[draw, fill=black, shape=circle, minimum size=4pt, inner sep=1pt];
\node[inner sep=1.5pt] (m1) at (0,0) {};
\node[inner sep=1.5pt] (m2) at (1.5,3) {};
\node[inner sep=1.5pt] (m3) at (3,0) {};
\node[fill=blue,square] (l1) at (1.5,1.5) {};
\node[fill=blue,square] (l2) at (-1,2.5) {};
\node[fill=blue,square] (l3) at (4,2.5) {};
\node[fill=red,diamond] (r1) at (-1,-0.6) {};
\node[fill=red,diamond] (r2) at (4,-0.6) {};
\node[fill=yellow,triangle] (s1) at (1.5,-0.7) {};
\node[fill=yellow,triangle] (s2) at (-1,1) {};
\node[fill=yellow,triangle] (s3) at (4,1) {};
\draw[thick, dash dot] (m1)--(m2)--(m3)--(m1);
\draw (m1)--(l1)--(m2)--(l1)--(m3);
\draw (m1)--(l2)--(m2)--(l2)--(m3);
\draw (m1)--(l3)--(m2)--(l3)--(m3);
\draw (m1)--(r1)--(m2)--(r1)--(m3);
\draw (m1)--(r2)--(m2)--(r2)--(m3);
\draw (m1)--(s1)--(m3);
\draw (m1)--(s2)--(m3);
\draw (m1)--(s3)--(m3);
\end{tikzpicture}
\qquad\qquad
\begin{tikzpicture}[xscale=0.65]
\footnotesize
\tikzstyle{every node}=[draw, fill=black, shape=circle, minimum size=4pt, inner sep=0.3pt];
\node[inner sep=1.2pt] (m1) at (0,0) {};
\node[inner sep=1.2pt] (m2) at (1.5,3) {};
\node[inner sep=1.2pt] (m3) at (3,0) {};
\node[fill=blue!30!white] (l1) at (1.5,1.5) {3};
\node[fill=red!30!white] (r1) at (-1,-0.6) {2};
\node[fill=yellow] (s3) at (4,1) {3};
\draw[thick, dash dot] (m1)--(m2)--(m3)--(m1);
\draw (m1)--(l1)--(m2)--(l1)--(m3);
\draw (m1)--(r1)--(m2)--(r1)--(m3);
\draw (m1)--(s3)--(m3);
\end{tikzpicture}
\begin{tikzpicture}[xscale=0.65]
\footnotesize
\tikzstyle{every node}=[draw, fill=black, shape=circle, minimum size=4pt, inner sep=0.3pt];
\node[inner sep=1.2pt] (m1) at (0,0) {};
\node[inner sep=1.2pt] (m2) at (1.5,3) {};
\node[inner sep=1.2pt] (m3) at (3,0) {};
\node[fill=blue!30!white] (l1) at (-1,2.5) {3};
\node[fill=red!30!white] (r1) at (-1,-0.6) {2};
\node[fill=yellow] (s3) at (1.5,-0.7) {3};
\draw[thick, dash dot] (m1)--(m2)--(m3)--(m1);
\draw (m1)--(l1)--(m2)--(l1)--(m3);
\draw (m1)--(r1)--(m2)--(r1)--(m3);
\draw (m1)--(s3)--(m3);
\end{tikzpicture}
\caption{An illustration: the graph on the left has a vertex cover $X$ formed by
the three black vertices.
There are three topological clusters wrt.~$X$, depicted by the blue
(square), red (diamond) and yellow (triangle) vertices.
On the right, we can see two different topological clusterings of this
graph, with the weights in the node circles.
}
\label{fig:clusterings}
\end{figure}

In view of Definition~\ref{def:topocluster},
it will be useful to consider the crossing number of the following 
{\em independently weighted} graphs.
For a graph $H$, an independent set $Y\subseteq V(H)$,
and a weight function $c:Y\to\mathbb N$,
let the edge weights in $H$ be as follows:
for $e\in E(H)$ having one end $x\in Y$ we set $c'(e):=c(x)$, 
and for edges $e$ with both ends in $V(H)\setminus Y$ we set~$c'(e)=1$.
This defines the weighted crossing number of $H$ and,
in particular for $Y=V(D_X)\setminus X$, 
the weighted crossing number $\crg(D_X)$
of $D_X$ with respect to its weight function~$c$.

We can now formulate and prove the core claim:

\begin{lemma}\label{lem:noncluster}
For every good drawing $D$ of a graph $G$ with a vertex cover $X$,
there exists its topological clustering $D_X$ such that the number of
non-cluster crossings in $D$ is at least $\crg(D_X)$.
\end{lemma}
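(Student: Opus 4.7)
The plan is to prove the lemma by a direct probabilistic/averaging argument, in the spirit of the random-representative technique used by Christian, Richter and Salazar~\cite{DBLP:journals/jct/ChristianRS13} in the special case $G=K_{k,n}$. Given the good drawing $D$, enumerate the topological clusters $C_1,\ldots,C_r$ of $V(G)\setminus X$ induced by $D$, and construct $D_X$ \emph{randomly}: for each $C_i$ independently, pick a representative $t_i\in C_i$ uniformly at random, set $V(D_X)=X\cup\{t_1,\ldots,t_r\}$ with weights $c(t_i)=|C_i|$, and take $D_X$ to be the corresponding induced subdrawing of $D$. Since induced subdrawings inherit both neighbourhoods in $X$ and clockwise cyclic orders, the four bullets of Definition~\ref{def:topocluster} will hold automatically; in particular, two representatives coming from distinct clusters in $D$ cannot end up in a common cluster of $D_X$, because they already differed either in their $X$-neighbourhood or in their cyclic order, and both invariants survive the restriction.

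The heart of the argument is to split every crossing of $D$ into four exhaustive types, according to the location of the ends of the two crossing edges, using the fact that every edge of $G$ has at least one end in $X$ (because $X$ is a vertex cover):
(a) both edges have their non-$X$ end in the \emph{same} cluster $C$;
(b) both edges have a non-$X$ end, and these lie in \emph{distinct} clusters $C_i\neq C_j$;
(c) one edge has a non-$X$ end in some cluster $C$, while the other edge lies entirely inside $X$;
(d) both edges lie entirely inside $X$.
By definition, types (b), (c), (d) are exactly the non-cluster crossings counted by $\crgnc(D)$, while (a) are the cluster crossings. A short calculation will show that a type-(b) crossing survives in $D_X$ with probability $1/(|C_i||C_j|)$ and is then counted with weight $|C_i|\cdot|C_j|$; a type-(c) crossing survives with probability $1/|C|$ and is counted with weight $|C|\cdot 1$; and a type-(d) crossing always survives with weight $1$. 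Each such crossing therefore contributes \emph{exactly} $1$ to the expected weighted crossing number of $D_X$. A type-(a) crossing, on the other hand, contributes $0$: the two endpoints in $C$ must be distinct (adjacent edges do not cross in a good drawing), but our procedure never selects two distinct vertices of the same cluster as representatives.

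Summing over all crossings yields $\mathbb{E}[\crg(D_X)]=\crgnc(D)$, so some realisation of the random choice produces a topological clustering $D_X$ with $\crg(D_X)\le \crgnc(D)$, as claimed. The point requiring most care will be the exhaustive bookkeeping of the four crossing types, which relies crucially on the vertex-cover property excluding edges with both ends outside $X$, together with the verification that $D_X$ as constructed satisfies every bullet of Definition~\ref{def:topocluster} — most notably that representatives of clusters distinct in $D$ remain in distinct topological clusters when viewed inside the smaller drawing $D_X$.
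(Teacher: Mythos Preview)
Your argument is correct and takes a genuinely different route from the paper's. The paper proceeds deterministically and iteratively: it processes the topological clusters one at a time, and in each cluster $S$ it selects as representative the vertex $s_0$ whose incident edges carry the \emph{least} total non-cluster-crossing cost, then (conceptually) reroutes every other $s\in S$ into a thin strip along the edges of $s_0$ to certify that passing to the weighted subdrawing with $c(s_0)=|S|$ does not increase~$\crgnc$. Your averaging argument replaces this greedy min-cost choice by a uniform random one and the iterative bookkeeping by a single linearity-of-expectation computation, which even yields the exact identity $\mathbb{E}[\crg(D_X)]=\crgnc(D)$ rather than only an inequality. What the paper's version buys is an explicit constructive recipe for~$D_X$, and its rerouting picture directly foreshadows the ``lifting'' of a clustering back to a full drawing used later in the proof of Theorem~\ref{thm:mainbits}; what your version buys is brevity and a cleaner exposure of the four-way case split and of exactly where the vertex-cover hypothesis (no edge with both ends outside~$X$) is used.
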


\begin{proof}
We start by setting $D':=D$, and then
we will inductively choose suitable representatives of
the topological clusters of $D'$ until we arrive at desired~$D_X$.

Let the {\em cost} of a vertex $x$ of $D'$ be the sum of all
non-cluster(!) crossings carried by the edges of $D'$ incident to~$x$.
We pick any (nonempty) topological cluster $S\subseteq V(D')\setminus X$
and choose a vertex $s_0\in S$ having the least cost among those of~$S$.
If $|S|>1$, we iterate the following over all $s\in S\setminus\{s_0\}$;
\begin{itemize}\vspace*{-1ex}
\item remove $s$ and its incident edges from the drawing $D'$,
\item choose a new vertex $s'$ in a tiny neighbourhood of $s_0$ in $D'$,
and draw the edges of $s'$ in a tiny strip along the edges of $s_0$ 
while making only such non-cluster crossings as those that exist 
on the edges of $s_0$.
\end{itemize}
Since $s_0$ and $s$ have had the same neighbourhood, the new drawing $D''$
as a graph
(with~$s'$) is isomorphic to $D'$ and the vertices $s_0$ and $s'$ still belong
to the same topological cluster.
Moreover, since $s_0$ has been chosen with the least cost,
we have $\crgnc(D'')\leq\crgnc(D')$.
(New cluster crossings can be simply ignored.)

At the end of the previous procedure, we get a drawing $D^\circ$
(isomorphic to~$D'$) which has no more non-cluster crossings than~$D'$,
and all edges of the cluster $S$ are drawn ``the same way closely together''
in~$D^\circ$.
From this it follows that the number of non-cluster crossings carried by
the edges incident to the cluster $S$ in $D^\circ$ equals $|S|$-times 
this number on the edges incident to $s_0$ in $D'$.
We therefore define $D_1$ as the induced subdrawing of $D'$ obtained by
deleting the vertices of $S\setminus\{s_0\}$ and assigning the weight $c(s_0)=|S|$.
In the setting of independently weighted graph underlying $D_1$,
we get $\crgnc(D_1)=\crgnc(D^\circ)\leq\crgnc(D')$.

\smallskip
We finish the proof inductively.
Let $r$ be the number of topological clusters of given $D$.
For $i=1,2,\ldots,r-1$, we now repeat the previous steps
for $D':=D_i$, obtaining a subdrawing $D_{i+1}$ such that
$\crgnc(D_{i+1})\leq\crgnc(D_i)$.
Finally, $D_{r}$ is a topological clustering of~$D$ with respect to $X$
by Definition~\ref{def:clustering}
and we conclude $\crg(D_{r})=\crgnc(D_{r})\leq\ldots\leq\crgnc(D_1)\leq\crgnc(D)$.
\qed\end{proof}

\section{Counting the Crossings in Clusters and Between}\label{sec:counting}

In order to complement Lemma~\ref{lem:noncluster},
we need to estimate also the number of cluster crossings in a drawing~$D$.
This is actually quite easy using the fact that two vertices in the same
topological cluster have the same cyclic ordering of their neighbours.
We use the following simple claim (cf.~Figure~\ref{fig:clusterstacking}):

\begin{lemma}[{\cite[Lemma~2.1]{DBLP:journals/jct/ChristianRS13}}]
\label{lem:K2mcr}
Let $x,y$ be the two vertices of degree $m$ in $K_{2,m}$ for~$m\geq3$.
Consider any good drawing $D$ of $K_{2,m}$ such that the clockwise cyclic
order of the neighbours of $x$ within $D$ is the same as the 
clockwise cyclic order of the neighbours of~$y$.
Then $\crg(D)\geq\lfloor\frac m2\rfloor\cdot\lfloor\frac{m-1}2\rfloor
 :=Z(m)$.
\end{lemma}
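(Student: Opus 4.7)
The plan is to recast $D$ as an arrangement of curves in an annulus and then extract a combinatorial lower bound from their homotopy classes.

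Since $D$ is a good drawing, the $m$ edges at $x$ form an uncrossed star, and likewise at $y$; shrinking these stars into small disks $D_x,D_y$, the drawing outside $D_x\cup D_y$ consists of $m$ pairwise disjoint simple arcs $\alpha_1,\ldots,\alpha_m$, where $\alpha_i$ runs from a point $a_i\in\partial D_x$ through the intermediate vertex $i$ to a point $b_i\in\partial D_y$. The region $A:=S^2\setminus(D_x\cup D_y)$ is a topological annulus, and all crossings of $D$ occur among the $\alpha_i$ inside $A$. The matching-rotation hypothesis translates into the statement that, with respect to the induced orientations of $\partial D_x$ and $\partial D_y$ from $A$, the points $a_1,\ldots,a_m$ and $b_1,\ldots,b_m$ appear in the same cyclic order $1,\ldots,m$ on the two boundary circles.

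Next, I would parameterise each arc's homotopy class (relative to endpoints) by an integer winding number $w_i\in\mathbb Z$, using the generator of $\pi_1(A)\cong\mathbb Z$. Lifting to the universal cover of $A$ (an infinite strip) and pulling each $\alpha_i$ tight to its geodesic representative, a short count of line-segment intersections in the strip yields, for every $1\leq i<j\leq m$,
\[
|\alpha_i\cap\alpha_j|\;\geq\;|w_i-w_j+1|,
\]
where the ``$+1$'' encodes that matching (as opposed to reverse) boundary orders force the two naively-straight arcs to interfere. Summing over pairs gives $\crg(D)\geq\sum_{i<j}|w_i-w_j+1|$.

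Finally, the inequality $\sum_{i<j}|w_i-w_j+1|\geq Z(m)$ holds for every $w\in\mathbb Z^m$ by an elementary counting argument. Call a pair $(i,j)$ with $i<j$ \emph{good} if $w_j=w_i+1$: such a pair contributes $0$ to the sum, while every other pair contributes at least $1$. For each $k\in\mathbb Z$ set $c_k=|\{i:w_i=k\}|$ and write $E=\sum_{k\text{ even}}c_k$, $O=\sum_{k\text{ odd}}c_k$, so that $E+O=m$. The number of good pairs is bounded above by $\sum_k c_kc_{k+1}\leq EO\leq\lfloor m^2/4\rfloor$ (the last step by AM-GM), whence the sum is at least $\binom{m}{2}-\lfloor m^2/4\rfloor=\lfloor m/2\rfloor\lfloor(m-1)/2\rfloor=Z(m)$. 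The principal obstacle in this plan is the second step: the orientation bookkeeping for the two boundary circles of $A$ is subtle, and the specific ``$+1$'' shift --- which distinguishes the matching-rotation case from the planar reverse-rotation case --- is what ultimately produces the non-trivial lower bound.
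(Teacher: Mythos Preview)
The paper does not supply its own proof of this lemma: it is quoted verbatim as Lemma~2.1 of Christian--Richter--Salazar and used as a black box. So there is no ``paper's proof'' to compare against; what you have written is a self-contained argument for the cited result.

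Your argument is sound. The annulus model is the right setting, and the linking count in the universal cover does give exactly $|w_i-w_j\pm1|$ forced crossings between $\alpha_i$ and $\alpha_j$ for $i<j$ (the sign of the~$\pm1$ depends on your winding convention; with the orientation choices you describe it comes out consistently, and in any case the final minimisation over $\vec w\in\mathbb Z^m$ is insensitive to this sign). Your combinatorial lower bound is also correct: the number of ``good'' ordered pairs is at most $\sum_k c_kc_{k+1}$, and since every term $c_kc_{k+1}$ has one factor of each parity, $\sum_k c_kc_{k+1}\le\bigl(\sum_{k\text{ even}}c_k\bigr)\bigl(\sum_{k\text{ odd}}c_k\bigr)=EO\le\lfloor m^2/4\rfloor$, whence the sum is at least $\binom{m}{2}-\lfloor m^2/4\rfloor=Z(m)$.

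Two remarks. First, since $D$ is a good drawing, each pair $\alpha_i,\alpha_j$ can cross at most twice (only the two non-adjacent edge pairs may cross, once each), so in fact $|w_i-w_j\pm1|\le2$ is forced a posteriori; this is irrelevant for the lower bound but worth noting. Second, you are right that the orientation bookkeeping is the crux: with reversed rotations the shift is~$0$ rather than~$\pm1$, and then $\vec w=\vec0$ gives a crossing-free drawing. Stating explicitly which boundary circle carries which induced orientation, and deriving the shift from that, would remove the only genuinely delicate step from your write-up.
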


\begin{corollary}\label{cor:clustercr}
Consider a good drawing $D$ of a graph $G$ with a vertex cover~$X$,
and a topological cluster $S\subseteq V(D)\setminus X$ of size~$c=|S|$.
Let the degree of vertices in $S$ be~$m$.
Then the number of cluster crossings in $D$ between the edges 
incident with $S$ is at least
${c\choose2}\cdot\lfloor\frac m2\rfloor\cdot\lfloor\frac{m-1}2\rfloor
 ={c\choose2}\cdot Z(m)$.
\end{corollary}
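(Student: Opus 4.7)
The plan is to reduce the statement to Lemma~\ref{lem:K2mcr} applied pairwise, and then observe that the resulting lower bounds are additive because no crossing is shared between two different pairs of cluster vertices.

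First, I would fix an arbitrary unordered pair $\{x,y\}\subseteq S$. Since $S\subseteq V(G)\setminus X$ is independent and every vertex of $S$ has the same neighbourhood $N\subseteq X$ of size $m$, the subgraph $G[\{x,y\}\cup N]$ is isomorphic to $K_{2,m}$. The induced subdrawing $D_{xy}$ of $D$ on $\{x,y\}\cup N$ is a good drawing of $K_{2,m}$, and by the defining property of a topological cluster (Definition~\ref{def:clustering}), the clockwise cyclic order of the neighbours of $x$ in $D_{xy}$ coincides with the clockwise cyclic order of the neighbours of $y$. Assuming $m\geq 3$, Lemma~\ref{lem:K2mcr} applies and yields $\crg(D_{xy})\geq Z(m)$; the degenerate cases $m\leq 2$ are trivial since then $Z(m)=0$.

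Second, I would argue that every crossing counted inside $D_{xy}$ is in fact a crossing between an edge incident with $x$ and an edge incident with $y$. Indeed, since $D$ is good, adjacent edges do not cross, so two edges sharing an endpoint in $N$ do not contribute; and the only other pairs of edges in $D_{xy}$ are precisely those with one end at $x$ and one end at $y$. In particular, each such crossing is a cluster crossing of $D$ witnessed by the pair $\{x,y\}\subseteq S$.

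Finally, I would sum the pairwise bounds over all $\binom{c}{2}$ unordered pairs in $S$. The key observation making this sum valid is that any edge incident with $S$ has exactly one endpoint in $S$ (again because $S$ is independent), so a crossing between two such edges is associated with a \emph{unique} pair $\{x,y\}\subseteq S$ and therefore is counted in the bound of exactly one $D_{xy}$. Hence the total number of cluster crossings between edges incident with $S$ is at least $\binom{c}{2}\cdot Z(m)$, as claimed. There is no real obstacle here beyond the disjointness bookkeeping in the last step, which is settled by the independence of $S$.
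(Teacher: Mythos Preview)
Your approach is essentially the paper's: apply Lemma~\ref{lem:K2mcr} to each pair from $S$ and then argue the $\binom{c}{2}$ collections of crossings are pairwise disjoint. The disjointness argument is the same idea phrased slightly differently (you track the unique $S$-endpoints of the two crossing edges, while the paper observes that $E(D_{s_1,s_2})\cap E(D_{s_1,s_3})=E(R_{s_1})$ consists only of mutually adjacent edges).

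One small slip to fix: the vertex-induced subgraph $G[\{x,y\}\cup N]$ is in general \emph{not} isomorphic to $K_{2,m}$, because $N\subseteq X$ and $X$ need not be independent, so there may be edges inside $N$. Consequently your ``induced subdrawing $D_{xy}$ on $\{x,y\}\cup N$'' may contain extra edges, and then the claim in your second step that ``the only other pairs of edges in $D_{xy}$ are those with one end at $x$ and one end at $y$'' fails. The fix is immediate and is exactly what the paper does: take $D_{xy}$ to be the subdrawing on the edge set $E(R_x)\cup E(R_y)$ (the two stars at $x$ and $y$), which \emph{is} a copy of $K_{2,m}$, and apply Lemma~\ref{lem:K2mcr} to that. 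The rest of your argument then goes through unchanged.
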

Readers may notice that the formula 
${c\choose2}\cdot\lfloor\frac m2\rfloor\cdot\lfloor\frac{m-1}2\rfloor$
in the lemma is not symmetric in $c$ and $m$ -- 
it grows on one hand with $c^2/2$ and on the other hand with $m^2/4$.
This is correct since the setting is also not symmetric.
The vertices in $S$ are required to have the same cyclic order of
neighbours in~$D$, but the neighbours of $S$ do not have this property.

\begin{proof}
Let $Z(m)=\lfloor\frac m2\rfloor\cdot\lfloor\frac{m-1}2\rfloor$.
For $s\in S$, denote by $R_s\subseteq G$ the subgraph
(a~star) induced by $s$ and the incident edges of $s$ in~$G$.

There is nothing to prove (the bound equals~$0$) for $m\leq2$ or $c=1$.
Otherwise, for every pair $s_1,s_2\in S$, $s_1\not=s_2$,
we apply Lemma~\ref{lem:K2mcr} to the subdrawing $D_{s_1,s_2}$ of $D$ 
induced by $R_{s_1}\cup R_{s_2}$, getting at least $Z(m)$ crossings
within $D_{s_1,s_2}$.
If $s_3\in S$ is different from $s_1$ and $s_2$, then the crossings
in $D_{s_1,s_3}$ are all distinct from the crossings in $D_{s_1,s_2}$;
this is since $E(D_{s_1,s_2})\cap E(D_{s_1,s_3})=E(R_{s_1})$, but
the edges on $R_{s_1}$ are all incident to $s_1$ and so they cannot
mutually cross in a good drawing.
Consequently, each of the $c\choose2$ invocations of Lemma~\ref{lem:K2mcr}
contributes a collection of at least $Z(m)$ new crossings,
providing the overall lower bound of ${c\choose2}\cdot Z(m)$
cluster crossings between the edges incident with $S$.
\qed\end{proof}

\begin{figure}[t]
\centering
\begin{tikzpicture}[xscale=0.9]
\tikzstyle{every node}=[draw, fill=black, shape=circle, minimum size=4pt, inner sep=0.3pt];
\node (l1) at (-3,0) {};
\node (l2) at (-2,0) {};
\node (l3) at (-1,0) {};
\node (l5) at (1,0) {};
\node (l6) at (2,0) {};
\node (l7) at (3,0) {};
\node (l8) at (4,0) {};
\node[fill=white] (m1) at (0,1) {};
\node[fill=white] (m2) at (0,2) {};
\draw (m1)--(l1)--(m2);
\draw (m1)--(l2)--(m2);
\draw (m1)--(l3)--(m2);
\draw (m1)--(l5)--(m2);
\draw (m1)--(l6)--(m2);
\draw (m1)--(l7)--(m2);
\draw (m1)--(l8)--(m2);
\end{tikzpicture}
\qquad\qquad
\begin{tikzpicture}[xscale=0.6,yscale=0.8]
\tikzstyle{every node}=[draw, fill=black, shape=circle, minimum size=2.5pt, inner sep=0.3pt];
\node (l1) at (-3,0) {};
\node (l2) at (-2,0) {};
\node (l3) at (-1,0) {};
\node (l5) at (1,0) {};
\node (l6) at (2,0) {};
\node (l7) at (3,0) {};
\node (l8) at (4,0) {};
\node[fill=white] (m1) at (0,0.5) {};
\node[fill=white] (m2) at (0,1) {};
\node[fill=white] (m3) at (0,1.5) {};
\node[fill=white] (m4) at (0,2) {};
\node[draw=none,fill=none] (mx) at (0,2.7) {$\vdots$};
\node[fill=white] (m5) at (0,3.5) {};
\node[fill=white] (m6) at (0,4.1) {};
\tikzstyle{every path}=[very thin];
\draw (m1)--(l1)--(m2);\draw (m1)--(l2)--(m2);\draw (m1)--(l3)--(m2);
\draw (m1)--(l5)--(m2);\draw (m1)--(l6)--(m2);\draw (m1)--(l7)--(m2);
\draw (m1)--(l8)--(m2);
\draw (m3)--(l1)--(m4);\draw (m3)--(l2)--(m4);\draw (m3)--(l3)--(m4);
\draw (m3)--(l5)--(m4);\draw (m3)--(l6)--(m4);\draw (m3)--(l7)--(m4);
\draw (m3)--(l8)--(m4);
\draw (m5)--(l1)--(m6);\draw (m5)--(l2)--(m6);\draw (m5)--(l3)--(m6);
\draw (m5)--(l5)--(m6);\draw (m5)--(l6)--(m6);\draw (m5)--(l7)--(m6);
\draw (m5)--(l8)--(m6);
\end{tikzpicture}
\caption{Left: an optimal drawing of $K_{2,7}$ achieving the minimum number
of $Z(7)=9$ crossings among all drawings in which the two vertices of degree
$7$ have the same cyclic order of their neighbours (as in Lemma~\ref{lem:K2mcr}).
\protect\\
Right: ``stacking'' the left subdrawings such that the total number of
cluster crossings here matches the lower bound given by Corollary~\ref{cor:clustercr}.}
\label{fig:clusterstacking}
\end{figure}
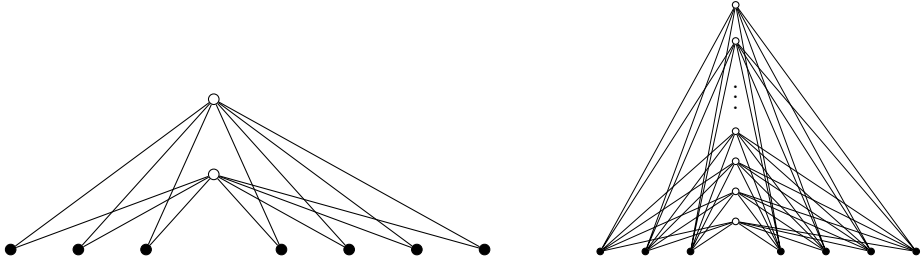

The next step is to introduce an ``abstract level'' of a topological clustering.
Simply put, a drawing $D$ is an {\em abstract topological clustering}  of
a graph $G$ with respect to its vertex cover~$X$
if $D$ is a topological clustering of {\em some} drawing of $G$, but without
the weight function. More precisely:

\begin{definition}\label{def:abstratcluster}
A drawing $C_X$ is an {\em abstract topological clustering} of a graph $G$
with respect to its vertex cover $X$ if the following hold:
\begin{itemize}\vspace*{-1ex}
\item $C_X$ is a good drawing of an induced subgraph of $G$ containing~$X$,
\item for every vertex $w\in V(G)\setminus X$ there is a vertex in 
$V(C_X)\setminus X$ having the same neighbourhood as~$w$ in~$G$,
and
\item no two vertices of $V(C_X)\setminus X$ are in the same topological
cluster in $C_X$.
\end{itemize}
\end{definition}

We will further use the term of {\em planarization} of a drawing $D$,
which is the plane graph obtained from $D$ by turning every crossing into a
new degree-$4$ vertex.
Two drawings $D_1$ and $D_2$ of the same graph are {\em combinatorially equivalent}
if the same pairs of edges cross in $D_1$ as in $D_2$, moreover in the same
order of the crossings on each edge,
and their planarizations are equivalent plane graphs
(i.e., with the same collection of faces).

\begin{lemma}\label{lem:abstractbd}
Consider a graph $G$ with a vertex cover~$X$ of size $k=|X|$.
Then every abstract topological clustering of $G$ has size at most
singly exponential in~$k$,
and the number of combinatorially non-equivalent abstract topological
clusterings of $G$ is bounded from above by a doubly exponential function
of~$k$.
\end{lemma}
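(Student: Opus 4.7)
The plan is first to upper bound the number of non-$X$ vertices in any abstract topological clustering $C_X$, and then to enumerate over the (singly exponentially bounded) underlying subgraphs and over rotation systems of their planarizations.

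For the size bound, I would exploit bullet~3 of Definition~\ref{def:abstratcluster}: every vertex of $V(C_X)\setminus X$ belongs to a distinct topological cluster of~$C_X$. By Definition~\ref{def:clustering}, a topological cluster is determined by a neighbourhood $N\subseteq X$ (at most $2^k$ choices) together with a clockwise cyclic ordering of~$N$ (at most $(|N|-1)!\le(k-1)!$ choices). Consequently $|V(C_X)\setminus X|\le 2^k\cdot(k-1)!$, and so $|V(C_X)|\le k+2^k(k-1)!=2^{O(k\log k)}$, which is singly exponential in~$k$ as claimed.

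For the count bound, let $N=|V(C_X)|$. Since $X$ is a vertex cover of~$G$, every non-$X$ vertex of $C_X$ has at most $k$ incident edges, so $M:=|E(C_X)|\le kN+\binom{k}{2}=2^{O(k\log k)}$. In a good drawing any pair of edges crosses at most once, so the planarization $P$ of $C_X$ has at most $N+\binom{M}{2}=2^{O(k\log k)}$ vertices and at most $M+2\binom{M}{2}=2^{O(k\log k)}$ edges. A combinatorial equivalence class of good drawings of a fixed underlying induced subgraph of~$G$ is fully determined by a rotation system on~$P$, and the number of such rotation systems is at most $\prod_v(d_v-1)!\le (2|E(P)|)!=2^{2^{O(k\log k)}}$ (using $\prod_i a_i!\le(\sum_i a_i)!$ and Stirling).

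Finally, I would enumerate the underlying subgraphs. The set $V(C_X)\setminus X$ is a selection of up to $(k-1)!$ representatives for each of the $\le 2^k$ realisable neighbourhood types, so there are at most $((k-1)!)^{2^k}=2^{2^{O(k)}}$ choices of subgraph, and within each subgraph $C_X$ inherits the induced edge set from~$G$. Multiplying the two factors gives a bound doubly exponential in~$k$ on the number of combinatorially non-equivalent abstract topological clusterings, as required. The main technicality is the clean bookkeeping of combinatorial equivalence (handled via rotation systems of planarizations) and verifying that each intermediate quantity ($N$, $M$, the crossing count, and the planarization size) stays within a single exponential in~$k$, so that only one layer of exponentiation is added when we pass to rotation systems.
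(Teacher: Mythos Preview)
Your proof is correct and follows essentially the same route as the paper's: bound $|V(C_X)\setminus X|$ by $2^k\cdot(k-1)!$ via neighbourhood--cyclic-order pairs, deduce that the planarization has singly-exponential size in~$k$, and conclude that the number of nonequivalent planarizations is at most doubly exponential. One minor imprecision worth flagging: when you write that the equivalence class is ``fully determined by a rotation system on $P$'', bear in mind that the abstract graph $P$ itself varies with the drawing (it records which edge pairs cross, and in which order along each edge), so strictly you must also count the possible abstract graphs~$P$; this contributes at most a further factor of $2^{\binom{|V(P)|}{2}}=2^{2^{O(k\log k)}}$, which is absorbed into the final doubly-exponential bound. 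The paper's own proof is in fact terser than yours at this very step---it simply asserts that there are at most $2^{k^{O(k)}}$ nonequivalent planarizations on $k^{O(k)}$ vertices---so your version, once patched, is if anything more explicit.
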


\begin{proof}
A topological cluster in a drawing of $G$ is uniquely determined by one of
$2^k$ possible neighbourhood subsets in $X$, and one of up to $(k-1)!$ cyclic
orders of the neighbours.
Hence an abstract topological clustering $C$ of $G$ has at most
$k+2^k(k-1)!\leq k^{\cO(k)}$ vertices.
Hence the number of edges and of pairs of edges of $C$ is bounded from
above by $k^{\cO(k)}$, which also implies $\crg(C)\leq k^{\cO(k)}$.
Hence the planarization of $C$ has at most $k^{\cO(k)}$ vertices,
and there are altogether at most $2^{k^{\cO(k)}}$ such possible
nonequivalent planarizations of abstract topological clusterings of~$G$.
\qed\end{proof}

A consequence of Lemma~\ref{lem:abstractbd} is that we can, in FPT time,
process all possible abstract topological clusterings of any graph $G$ 
with a small vertex cover.
Therefore, from now on, we may just fix one abstract topological clustering
$C_X$ of~$G$ and discuss how to optimize the crossing number over all
such drawings of $G$ whose topological clustering comes from~$C_X$.
The latter problem will be reduced to a bounded instance of IQP, similarly
as the special case of complete bipartite graphs has been handled
in aforementioned \cite{DBLP:journals/jct/ChristianRS13}.

\subsection{IQP Formulation for Crossings}

In regard of Definition~\ref{def:abstratcluster} and the coming arguments,
it will be useful to consider the following ``compressed'' representation of
a graph $G$ with a small vertex cover~$X$.
Let $G_X$ denote the subgraph of $G$ induced by~$X$, and
consider the function $h:2^X\to\mathbb N_0$ such that,
for any $Y\subseteq X$,
$h(Y)$ is the number of vertices of $G$ outside of $X$ 
whose neighbourhood in $G$ is exactly~$Y$.
Clearly, $G_X$ and $h$ determine $G$ up to an isomorphism
(and the size of this description can be only logarithmic compared to the
size of~$G$).

For given $G$ and $X$, let us fix any abstract topological
clustering $C_X$ of $G$ with respect to $X$.
For $Y \subseteq X$, let $S(Y)$
be the set of vertices of $V(C_X) \setminus X$ whose neighborhood in
$X$ is exactly $Y$. 
Note that $h(Y)$ is non-zero iff $S(Y)$ is non-empty.
Let $Y_1, \ldots, Y_l$
be an enumeration of all subsets of $X$ which map to a non-zero value
under $h$; then $\bigcup_{i = 1}^{\,l} S(Y_i) = V(C_X) \setminus X$.
For $i \in \{1, \ldots, l\}$, let
$g(i) = |S(Y_i)|$ and let $S(Y_i) = \{v_{(i, 1)}, \ldots, v_{(i,
  g(i))}\}$.

For an illustration, in Figure~\ref{fig:clusterings}
(on the right, but ignoring the weights since we are considering
an abstract clustering) we have got $l=2$ ($Y_1=X$ and $Y_2$ are the
two bottom vertices of $X$), and
$g(1)=2$ (blue and red clusters) and $g(2)=1$ (yellow cluster).
Altogether, $V(C_X) \setminus X$ has three vertices there.

Let $I$ be an index set defined as $I := \{ (i, j) \mid 1 \leq i \leq
l, 1 \leq j \leq g(i)\}$.  
Similarly as in~\cite{DBLP:journals/jct/ChristianRS13},
we define the following \emph{crossing vector} 
$\vec{p} = (p_\alpha \mid \alpha\in I)$ and the
\emph{crossing matrix} $\vec{Q} = (q_{\alpha,
  \beta} \mid \alpha,\beta\in I)$,
such that the intended use of $\vec p$ is to count the 
crossings between the edges of $G_X$ 
and the edges incident to each topological cluster
corresponding to a vertex of $V(C_X)\setminus X$,
and the intended use of $\mx Q$ is to count the cluster crossings
of each one of the topological clusters (the diagonal entries)
and the non-cluster crossings between pairs of the clusters
(the other entries):
\begin{itemize}\vspace*{-1ex}
\item The crossing vector $\vec{p}\,$: 
  Let $e_1, \ldots, e_r$ be an enumeration of the edges in $G_X$.
  For $\alpha\in I$ and $i \in \{1, \ldots, r\}$,
  let $p^i_\alpha$ be the number of edges incident to $v_\alpha$ that
  cross $e_i$ in $C_X$. Then $p_\alpha = \sum_{i = 1}^{r}
  p^i_\alpha$.
\item The crossing matrix $\vec{Q}\,$:
  Let $\alpha,\beta\in I$.
  If $\alpha \neq \beta$, define
  $q_{\alpha, \beta}$ as the number of crossings in $C_X$ between the
  edges incident to $v_\alpha$ and the edges incident to~$v_\beta$. If
  $\alpha = \beta = (i, j)$, then $q_{\alpha, \alpha} :=
  Z(|Y_i|) = \big\lfloor  \frac{|Y_i|}{2} \big\rfloor
  \cdot \big\lfloor \frac{|Y_i|-1}{2} \big\rfloor$.
\end{itemize}
($Z(\cdot)$ has been defined in Lemma~\ref{lem:K2mcr} and
Corollary~\ref{cor:clustercr}.)

To recapitulate where we stand now;
we have fixed an abstract topological clustering $C_X$ of $G$,
and in order to proceed to a drawing of $G$ (underlied by~$C_X$),
we first need to assign suitable integer weights to the vertices of 
$V(C_X) \setminus X$.
Our goal is to minimize the total number of crossings in 
the constructed drawing of $G$.
However, we only have $C_X$ and some assigned weights on $V(C_X) \setminus X$, which
together define the topological clustering $D_X$ of a drawing of~$G$.
The crossing number $\crg(D_X)$ is, via Lemma~\ref{lem:noncluster},
related to the number of non-cluster crossings in a desired drawing of~$G$
(refer to the proof of Theorem~\ref{thm:mainbits} for a precise formulation).
But it is not sufficient to minimize $\crg(D_X)$ since the cluster crossings
in a drawing of $G$ also play important role.

To complete the picture with cluster crossings, 
we define (cf.~Corollary~\ref{cor:clustercr})
$$
\clg(D_X) := \sum_{t\in V(D_X)\setminus X}
  {c(t)\choose2}\cdot Z\big(d(t)\big)
 = \sum_{t\in V(D_X)\setminus X}
  {c(t)\choose2}\cdot\left\lfloor\frac{d(t)}2\right\rfloor\cdot
        \left\lfloor\frac{d(t)-1}2\right\rfloor
,$$
where $c$ is the weight function of~$D_X$ and
$d(t)$ denotes the degree of $t$ (which is the same in $D_X$ as in~$G$).
Again, we refer to the proof of Theorem~\ref{thm:mainbits} for further
treatment of the relation of $\clg(D_X)$ to the cluster crossings
in a drawing of~$G$.

\begin{lemma}\label{lem:IQPcross}
Let $C_X$ be an abstract topological
clustering of $G$ with respect to a vertex cover $X$,
and denote by $\mathcal{D}(C_X)$ the set of all topological clusterings of
good drawings of~$G$ whose unweighted topological clustering is $C_X$.
Let, furthermore, $Y_i$, $g$, $I$, $\vec p$ and $\mx Q$ be as above.
Then the following IQP
\begin{eqnarray}
	\mbox{\rm Minimize\qquad} & f(\vec{z}) & = 
	\vec{z}^T \mx{Q} \vec{z} \>+\> 2\cdot\vec{p}^T \vec{z} 
\label{eq:IPQcross}\\\nonumber
        \mbox{\rm over all\qquad} & \vec{z} & =
	 \big(z_{(1, 1)}, \ldots, z_{(1, g(1))},
        \ldots, z_{(l, 1)}, \ldots, z_{(l, g(l))}\big)
\\\nonumber
        \mbox{\rm subject to\qquad} &   \sum\limits_{j = 1}^{g(i)}
	z_{(i, j)}  & = h(Y_i) ~~~~~~~\mbox{for}~ i \in \{1, \ldots, l\}
\\\nonumber
        & z_{(i, j)} & \ge 0 ~~~~~~~~~~~~\mbox{for}~ (i, j) \in I
\\\nonumber
        & \vec z & \in \,\mathbb{Z}^{|I|}
\end{eqnarray}
computes the minimum value of $2\cdot(\crg(D)+\clg(D)-r)$ over all 
$D\in \mathcal{D}(C_X)$,
where $r=\crg(C_X|_X)$ is the number of crossings in the subdrawing of $C_X$
induced by the vertex set~$X$.
\end{lemma}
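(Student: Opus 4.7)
The plan is to set up a bijection between feasible integer vectors $\vec z$ of the IQP \eqref{eq:IPQcross} and the topological clusterings $D\in\mathcal{D}(C_X)$, and then to show that the quadratic objective $f(\vec z)$ equals $2(\crg(D)+\clg(D)-r)$ up to an additive constant that is forced by the linear equality constraints.

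For the correspondence, each $D\in\mathcal{D}(C_X)$ is determined by $C_X$ together with a weight function $c:V(C_X)\setminus X\to\mathbb{N}$ satisfying $\sum_{j=1}^{g(i)} c(v_{(i,j)})=h(Y_i)$ for every $i\in\{1,\dots,l\}$; this equation merely says that all vertices of $V(G)\setminus X$ with neighbourhood $Y_i$ are accounted for by the clusters sharing that neighbourhood. Setting $z_{(i,j)}:=c(v_{(i,j)})$ turns these equations into precisely the equality constraints of the IQP, and non-negativity only allows a representative to receive weight zero (i.e.\ to be absent from the realisation). So feasible $\vec z$ and the drawings in $\mathcal{D}(C_X)$ correspond one-to-one.

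Next I would expand $\crg(D)$ and $\clg(D)$ in terms of $\vec z$ by classifying the crossings of $D$ by the types of the two edges involved. A crossing of two edges of $G_X$ (both of weight~$1$) contributes $1$, giving $r$ in total. A crossing of $e_i\in E(G_X)$ with an edge incident to some $v_\alpha$ contributes $z_\alpha$, so the total contribution is $\sum_\alpha p_\alpha z_\alpha=\vec{p}^T\vec{z}$ by the definition of $p^i_\alpha$. A crossing between edges incident to distinct $v_\alpha,v_\beta$ contributes $z_\alpha z_\beta$, and summed over unordered pairs this yields $\frac{1}{2}\sum_{\alpha\neq\beta}q_{\alpha,\beta}z_\alpha z_\beta$ by the symmetry $q_{\alpha,\beta}=q_{\beta,\alpha}$. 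Two edges incident to the same $v_\alpha$ cannot cross in the good drawing $C_X$, and Corollary~\ref{cor:clustercr} places the intra-cluster contribution into $\clg(D)=\sum_\alpha \binom{z_\alpha}{2}q_{\alpha,\alpha}$, which matches $q_{\alpha,\alpha}=Z(|Y_i|)$ by construction. Rewriting $\vec{z}^T\mx{Q}\vec{z}=\sum_\alpha q_{\alpha,\alpha}z_\alpha^2+\sum_{\alpha\neq\beta}q_{\alpha,\beta}z_\alpha z_\beta$ to merge the $z_\alpha^2/2$ piece of each binomial with the off-diagonal sum, a short algebraic manipulation yields
\[
 2\bigl(\crg(D)+\clg(D)-r\bigr) \;=\; \vec{z}^T\mx{Q}\vec{z} + 2\,\vec{p}^T\vec{z} - \sum_\alpha q_{\alpha,\alpha}\,z_\alpha .
\]

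The crux is that on the feasible set the residual linear term collapses to a constant: by the equality constraints,
\[
 \sum_\alpha q_{\alpha,\alpha}\,z_\alpha \;=\; \sum_{i=1}^{l} Z(|Y_i|)\sum_{j=1}^{g(i)} z_{(i,j)} \;=\; \sum_{i=1}^{l} Z(|Y_i|)\,h(Y_i),
\]
which depends only on $G$ and $X$. Hence $f(\vec z)$ and $2(\crg(D)+\clg(D)-r)$ differ by this fixed constant across all feasible $\vec z$, so the IQP minimises the one iff it minimises the other, and the target minimum value is recovered from $f(\vec z^\circ)$ by a single subtraction. The main obstacle is the bookkeeping that merges the diagonal products $q_{\alpha,\alpha}z_\alpha^2$ coming from $\vec{z}^T\mx{Q}\vec{z}$ (which correspond to no actual crossing of $C_X$, since adjacent edges in a good drawing do not cross) with the squared $z_\alpha^2$ arising from $\binom{z_\alpha}{2}q_{\alpha,\alpha}$; it is exactly this merge that leaves only a linear residue, and the equality constraints are precisely what annihilates it.
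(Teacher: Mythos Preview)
Your approach is the same decomposition as the paper's: split $\crg(D)$ into the $X$--$X$ crossings ($=r$), the $X$--cluster crossings ($=\vec p^{\,T}\vec z$), and the cluster--cluster crossings, and add $\clg(D)$ for the diagonal. Your argument is correct.

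In fact your bookkeeping is more careful than the paper's. The paper's proof asserts $r_2(D)+\clg(D)=\tfrac12\,\vec z^{\,T}\mx Q\vec z$ outright, which silently drops the linear term coming from $\binom{z_\alpha}{2}=\tfrac12(z_\alpha^2-z_\alpha)$. You correctly find
\[
2\bigl(\crg(D)+\clg(D)-r\bigr)=f(\vec z)-\sum_\alpha q_{\alpha,\alpha}z_\alpha,
\]
and then observe that $\sum_\alpha q_{\alpha,\alpha}z_\alpha=\sum_{i}Z(|Y_i|)\,h(Y_i)$ is constant on the feasible set, so the IQP has the same minimisers and the minimum value is recovered by a single subtraction. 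That is the right resolution; the paper's downstream use of the lemma (in Theorem~\ref{thm:mainbits}) only needs the minimiser anyway, so nothing is harmed.

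One small caveat on your bijection: the IQP admits $z_\alpha=0$, whereas a topological clustering in $\mathcal D(C_X)$ by Definition~\ref{def:topocluster} has $c(t)\in\mathbb N$, so strictly the feasible region is a superset of (the weight vectors of) $\mathcal D(C_X)$. This is harmless for the overall algorithm, since a feasible optimum with some $z_\alpha=0$ is simultaneously the value for the smaller abstract clustering $C_X\setminus\{v_\alpha\}$, which is enumerated separately; but it is worth a one-line remark rather than calling it a bijection.
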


\begin{proof}
First, note that for any $D\in \mathcal{D}(C_X)$ we have $\crg(D|_X)=r$ by
definition.
For a particular weight assignment~$\vec z$, consider the corresponding
topological clustering $D=D(C_X,\vec z)\in \mathcal{D}(C_X)$.
We write $\crg(D)=r+r_1(D)+r_2(D)$ where $r_1(D)$ counts the
(weighted) crossings in $D$ which involve one edge with both ends in $X$,
and $r_2(D)$ counts the crossings of which neither edge has both ends
in~$X$.
From the definition of the crossing vector $\vec p$ we immediately have
$r_1(D)=\vec{p}^T \vec{z}$.
From the definition of the crossing matrix $\vec Q$ and that of
$\clg(\cdot)$ we also get $r_2(D)+\clg(D)=\frac12\cdot \vec{z}^T
\mx{Q}\vec{z}$.
Altogether, $\frac12f(\vec z)=r_1(D)+r_2(D)+\clg(D)=\crg(D)+\clg(D)-r$.
\qed\end{proof}

We are now ready to prove the main result of this paper, which is
as stated below.

\begin{theorem}[refinement of Theorem~\ref{thm:main}]\label{thm:mainbits}
Consider a simple graph $G$ given on the input as follows:
there is a set $X$ (a vertex cover of~$G$), a simple graph $G_X$
(which is the subgraph of $G$ induced by~$X$), and a function
$h:2^X\to\mathbb N_0$ such that, for $Y\subseteq X$,
$h(Y)$ is the number of vertices of $G$ outside of $X$ 
whose neighbourhood in $G$ is exactly~$Y$.
The size of this input $G$ equals the size of $G_X$ plus the the length of
the bit-representation of function~$h$.

Then the problem to compute the crossing number of $G$ and the corresponding
topological clustering of an optimal drawing of~$G$ 
is fixed-parameter tractable with respect to the parameter~$k=|X|$.
\end{theorem}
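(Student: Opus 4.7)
The strategy is to enumerate all abstract topological clusterings $C_X$ of $G$ and, for each such $C_X$, to reduce the optimisation over drawings of $G$ whose topological clustering underlies $C_X$ to a single IQP instance handled by Theorem~\ref{thm:IQP}. By Lemma~\ref{lem:abstractbd} there are at most $2^{k^{\cO(k)}}$ combinatorially non-equivalent abstract topological clusterings of $G$, each of size $k^{\cO(k)}$. I would enumerate them by brute force from $G_X$ and the support of $h$: a representative vertex of $C_X$ is specified by a subset $Y\subseteq X$ with $h(Y)>0$ together with a cyclic order of its neighbours, and the plane embedding is encoded by its planarization. Since Lemma~\ref{lem:abstractbd} bounds the total number of candidate planarizations by a function of $k$, the enumeration runs in FPT time and I would ultimately return the best answer across all $C_X$.

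For each fixed $C_X$, I would set up the IQP of Lemma~\ref{lem:IQPcross}. The number of variables is $|I|=|V(C_X)\setminus X|\le k^{\cO(k)}$; the entries of $\mx Q$ and $\vec p$ count crossings among the at most $k^{\cO(k)}$ edges of $C_X$ or equal $Z(|Y_i|)\le k^2$, and so are bounded by $k^{\cO(k)}$; the equality matrix $\mx C$ has $0/1$ entries. The potentially large quantities $h(Y_i)$ appear only as entries of the right-hand side $\vec d$, so they contribute to the bit-length $L$ of the instance but not to the parameter $\lambda$ of Theorem~\ref{thm:IQP}. Both $k$ and $\lambda$ are therefore bounded by a function of $k$ alone, and Lokshtanov's algorithm solves each IQP instance in time $f(k)\cdot L^{\cO(1)}$.

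To match the resulting IQP minima with $\crg(G)$, I would combine Lemma~\ref{lem:noncluster}, Corollary~\ref{cor:clustercr} and Lemma~\ref{lem:IQPcross}. For the lower bound, given an optimal good drawing $D^\star$ of $G$, pass to its underlying abstract clustering $C_X$ and, via Lemma~\ref{lem:noncluster}, to a topological clustering $D_X$ with the number of non-cluster crossings of $D^\star$ at least $\crg(D_X)$; Corollary~\ref{cor:clustercr} then bounds the cluster crossings of $D^\star$ from below by $\clg(D_X)$, so $\crg(D^\star)\ge \crg(D_X)+\clg(D_X)$. For the matching upper bound, from an optimal IQP solution $\vec z^\circ$ I would construct a drawing of $G$ by ``un-stacking'' each representative $t$ of $D_X$ into $c(t)$ copies drawn in a narrow tubular neighbourhood of the curve of $t$, so that every pair of copies reproduces the optimal $K_{2,d(t)}$ pattern of Figure~\ref{fig:clusterstacking}. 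This yields exactly $\crg(D_X)$ non-cluster crossings (the weighted edges un-stack into bunches of parallel edges contributing the weight products already encoded in $\crg(D_X)$) and exactly $\clg(D_X)$ cluster crossings, matching the IQP optimum via Lemma~\ref{lem:IQPcross}.

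The step I expect to be most delicate is this un-stacking realisation: one must verify that the $c(t)$ copies can be arranged simultaneously across all clusters so that within each cluster every pair attains the $K_{2,m}$ lower bound of Lemma~\ref{lem:K2mcr} while no spurious crossings are introduced between different clusters or with the edges of $G_X$. Since all copies of a given representative share the same cyclic neighbourhood order by the definition of a topological cluster, the existence of such a stacking is topologically forced, but a careful bookkeeping of the simultaneous construction across all clusters, together with the verification that the IQP objective accounts for all these crossings with equality, is the principal technical obstacle.
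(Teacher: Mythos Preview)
Your proposal is correct and follows essentially the same approach as the paper: brute-force enumeration of abstract topological clusterings via Lemma~\ref{lem:abstractbd}, one IQP per clustering via Lemma~\ref{lem:IQPcross}, the lower bound from Lemma~\ref{lem:noncluster} and Corollary~\ref{cor:clustercr}, and the matching upper bound by un-stacking clusters as in Figure~\ref{fig:clusterstacking}. Your observation that the large values $h(Y_i)$ enter only the right-hand side $\vec d$ (hence $L$ but not $\lambda$) is exactly the point the paper relies on, and the ``delicate'' un-stacking step you flag is the one the paper dispatches by reference to~\cite{DBLP:journals/jct/ChristianRS13}.
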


Note that, when the vertex cover size $k=|X|$ is fixed,
the size of the input $G$ described in Theorem~\ref{thm:mainbits} is 
logarithmic in the number of vertices of~$G$.
Although, in a typical use case, in which we do not get the input graph $G$
in a parsed form as in Theorem~\ref{thm:mainbits}, but rather as a list of vertices
and edges, we can first compute, again in FPT \cite{MR1167659}, 
a vertex cover~$X$ of size $\leq k$ and the corresponding function~$h$.
Then, from the output topological clustering of an optimal drawing of~$G$,
we can easily in polynomial time construct the corresponding drawing of~$G$.
Hence Theorem~\ref{thm:mainbits} implies Theorem~\ref{thm:main}.

\begin{proof}
Consider an optimal drawing $D_0$ of $G$, i.e., one for which
$\crg(D_0)=\crg(G)$ holds.
Then $D_0$ may be assumed a good drawing by folklore arguments.
By Lemma~\ref{lem:noncluster}, there is a topological clustering $D_X$
of $D_0$ such that $\crgnc(D_0)\geq\crg(D_X)$.
Recall that $D_X$ is equipped with the weight function~$c$, and that
$\clg(D_X) =$ $\sum_{t\in V(D_X)\setminus X}
  {c(t)\choose2}\cdot Z\big(d(t)\big)$ 
where $d(t)$ denote the degree of $t$.
By Corollary~\ref{cor:clustercr}, the total number of cluster crossings 
in $D_0$ is at least~$\clg(D_X)$.

Now, let $C_X$ be the abstract topological clustering underlying~$D_X$.
Although we do not (yet) know $D_X$, we can ``find'' $C_X$ by a brute force
enumeration of all abstract topological clusterings of~$G$, which is still
in FPT by Lemma~\ref{lem:abstractbd}.
Precisely, for every possible~$C_X$ (where ``possible'' is checked
simply by brute force with respect to the parameter~$k$),
we compose an IQP as above \eqref{eq:IPQcross}.
Then, using Theorem~\ref{thm:IQP}, we solve it to get
an assignment $\vec z$ of weights to $C_X$, leading to a clustering $D_X'$,
such that the objective value $\clg(D_X')+\crg(D_X')$ is minimized
over all $D_X'\in\mathcal{D}(C_X)$ by Lemma~\ref{lem:IQPcross}.
Let, furthermore, $C_X^\circ$ be an abstract topological clustering of~$G$
achieving the overall minimum value of the IQP solutions --
this leads to a clustering $D^\circ_X$ with globally minimal
$\clg(D_X^\circ)+\crg(D_X^\circ)$ for given~$G$.

Consequently, counting separately the cluster and non-cluster crossings in~$D_0$,
and then considering the minimality of $D^\circ_X$, we get
$$
\crg(D_0)\geq \clg(D_X)+\crgnc(D_0)\geq
 \clg(D_X)+\crg(D_X)\geq \clg(D_X^\circ)+\crg(D_X^\circ)
.$$
It is now enough to ``lift'' the clustering $D_X^\circ$ into a corresponding
drawing $D_1$ of $G$ with $\clg(D_X^\circ)+\crg(D_X^\circ)$ crossings,
which follows straightforwardly in the same way as in 
\cite{DBLP:journals/jct/ChristianRS13}, see Figure~\ref{fig:clusterstacking}.
Hence $\clg(D_X^\circ)+\crg(D_X^\circ)\geq\crg(G)=\crg(D_0)$,
and so $\crg(D_1)=\crg(D_0)=\crg(G)$.

It remains to address runtime of our procedure.
In the IQP \eqref{eq:IPQcross} we have $|I|$ bounded from above by the size
of $C_X$, which is at most singly exponential in~$k$ by
Lemma~\ref{lem:abstractbd}.
The same asymptotic upper bound $k^{\cO(k)}$ from the proof of
Lemma~\ref{lem:abstractbd} applies also to $\crg(C_X)$, and this clearly
bounds all the entries of the matrix $\mx Q$ and the vector~$\vec p$.
Let $L$ be the length of the bit representation of $h$
(from the input representation of~$G$); then the length of the combined bit
representation of the IQP \eqref{eq:IPQcross} is at most
$f_1(k)\cdot L^{\cO(1)}$ for some computable (singly exponential) function $f_1$.
Then from Theorem~\ref{thm:IQP}, \eqref{eq:IPQcross} is solved by an algorithm 
in FPT time $f_2(k)\cdot L^{\cO(1)}$ for some computable function~$f_2$.
This IQP step is repeated, by brute force and independently
of $L$, at most $f_3(k)$ times
where $f_3$ is a computable function (doubly exponential) coming from the
bound on the number of abstract clusterings in Lemma~\ref{lem:abstractbd}.
\qed\end{proof}

\section{Conclusions}\label{sec:conclu}

In our work we have stressed simplicity of the considered graphs.
A natural question is about what happens if we consider multigraphs with a
vertex cover of size~$k$.
There is, unfortunately, no easy answer to this question since deep
problems arise in two different places of our arguments.
First, since the multiplicity of an edge may be unbounded in~$k$,
the entries of the crossing vector $\vec p$ and the crossing matrix $\mx Q$
would no longer be bounded in~$k$.
Second, when defining topological clusters, it would no longer be enough to
consider a bounded number of neighbourhoods in $X$ and a bounded number of
cyclic orders, but also a potentially unbounded number of different
multiplicities of the edges in a cluster.
Each of these problems would completely ruin the runtime of our procedure.

Therefore, we leave the problem of computational complexity of the exact
crossing number of multigraphs parameterized by a vertex cover size as open,
for future research.
On the other hand, in the special case of multigraphs with a vertex cover of
size $k$ and edge multiplicities bounded by a computable function of~$k$,
it is not difficult to extend our approach to obtain again an FPT algorithm
(which we skip here due to space restrictions).
 
At last, we would like to very briefly mention the problem of minimizing
the crossing number of a small perturbation of a given map of a graph,
e.g.~\cite{DBLP:conf/gd/FulekT18}, which shares some common ground with our
arguments.
Although the objectives of the two problems are not easily comparable, we
suggest that our approach can provide an efficient solution of the
latter problem on graphs of small vertex cover.

\bibliography{phcross}

\end{document}